\newtheorem{theorem}{Theorem}
\newtheorem{lemma}{Lemma}
\newtheorem*{theorem*}{Theorem}{\bf}{\it}
\newtheorem*{proposition*}{Proposition}{\bf}{\it}
\begin{document}

\title[Morgul T=2]{Further finding results for cycles by semilinear control}

\author{Dmitriy Dmitrishin, Elena Franzheva and Alex Stokolos}

\begin{abstract} \"O. Morg\"ul \cite{9} has suggested the following modification of the scalar dynamical system $x_{n+1}=f(x_n)$ to find periodic solutions of the period $T$
$$
x_{n+1}=(1-\gamma)f(x_n)+\gamma x_{n-T+1},\quad 0<\gamma<1.
$$ 
The schedule was extended to the vector case by Dmitirshin et all in \cite{Har}.  The advantage of the proposal scheme is in its simplicity. The main goal of the current paper is to investigate two-delays generalization in the form
$$
x_{n+1}=(1-\gamma)(a_1f(x_n)+a_2f(x_{n-T}))+\gamma (b_1x_{n-T+1}+b_2x_{n-2T+1}).
$$
The provided analysis suggests the effective choice of coefficients
$$
a_1=\frac{T+1}{T+2},\; a_2=\frac1{T+2},\quad b_1=b_2=\frac12,\quad \gamma\in\left(0,\frac{1.13}{T-1}\right], T>1.
$$
 The modified schedule allows to stabilize the cycles with dispersion of multipliers three times larger compare to Morg\"ul's scheme. The proof heavily involves methods of real analysis and geometric complex function theory.

\end{abstract}
\maketitle

\section{Introduction}

Accordingly to \cite{9} "Chaotic behavior is a very interesting and fascinating phenomenon which is frequently observed in many physical systems." As mathematical models for the description of chaotic behavior are used discrete dynamical systems. By control of chaos we mean a small external influence on the system or a small change in the structure of the system in order to transform the chaotic behavior of the system into regular (or chaotic, but with different properties) \cite{1}. The problem of optimal influence on the chaotic regime is one of the fundamental problems in nonlinear dynamics \cite{2,3}.

It is assumed that the dynamic system has a chaotic attractor, which contains a countable set of unstable cycles of different periods. If the control action locally stabilizes a cycle, then the trajectory of the system remains in its neighborhood, i.e. regular movements will be observed in the system. Hence, one of the way to control chaos is the local stabilization of certain orbits from a chaotic attractor.

To solve the stabilization problem, various control schemes were proposed \cite{4}, among them the controls based on the Delayed Feedback Control (DFC) principle are quite popular \cite{5}. Such controls, under certain conditions, allow local stabilization of equilibrium positions or cycles, which, generally speaking, are not known in advance. Among the DFC schemes, linear schemes are the simplest for physical implementation. However, they have significant limitations: they can be used only for a narrow area of the parameter space that enter the original nonlinear system, c.f. \cite{arxiv}.

To extend the class of systems to which the DFC scheme applies, it is necessary to introduce non-linear elements into the control. For the first time, a nonlinear DFC with one delay was considered in \cite{6}, where the advantages of such a modification are also noted, in particular, the fact that the control becomes robust. In \cite{7,8} the concept of nonlinear control with one delay from \cite{6} was extended: to the vector case; to the case of several delays; to the case of an arbitrary period $T$. It is shown that the control allows to stabilize cycles of arbitrary lengths, unless the multipliers are real and greater than one. A relationship is established between the size of the localization set of multipliers and the amount of delay in the nonlinear feedback.

In \cite{9,10}, a semilinear DFC scheme with linear and nonlinear elements was investigated. In spite of the fact that this scheme contains only one difference in control, nevertheless, it is possible to stabilize cycles with length $T = 1, 2$ under sufficiently general assumptions about cycle multipliers. For $T\ge 3$, the situation changes critically, and the stabilization of cycles is possible only if the rigid constraints on multipliers are met, namely only if cycle multiplier is in $(-(T/(T-2))^T,1).$ Our goal is to suggest a simple schedule finding/stabilizing cycles with complex multipliers lie in a region significantly wider then in the Morgul's case. 

In particular, the case $T=1,2$ is done in \cite{Har} for all N. In our situatuion that implieas the chooice $a_1=b_1=2/3$ and $a_2=b_2=1/3$ for $T=1$ and the choice of $a_1=3/4, a_2=1/4$ and $b_1=2/3, b_2=1/3$ for $T=2.$

Note, the suggested semilinear control is a convex combination of linear and non-linear controls. This blend has a very interesting property - the linear control is bad, the non-linear is better while the combined control produced fascinating effect.  

The non-linear control is controlling the situation while the linear part increases the rate of convergence and enlarges the possible area of localization of multipliers. 

\section{Review and preliminary results}

We consider vector nonlinear discrete system, which in the absence of control has the form
\begin{equation}\label{(1)}
{x_{n + 1}} = f\left({x_n}\right), \quad {x_n} \in {\mathbb R^m}, \quad n = 1, 2, \ldots,
\end{equation}
where $f\left(x\right)$ is a differentiable vector function of the corresponding dimension. It is assumed that the system (1) has an invariant convex set $A$, that is, if $\xi \in A$, then $f\left(\xi\right) \in A$. It is also assumed that in this system there is one or more unstable $T$-cycles $\left({{\eta_1}, \ldots, {\eta_T}}\right)$, where all the vectors ${\eta_1}, \ldots, {\eta_T}$ are distinct and belong to the invariant set $A$, i.e. ${\eta_{j + 1}} = f\left({{\eta_j}}\right), j = 1, \ldots, T - 1, {\eta_1} = f\left({{\eta_T}}\right)$. The multipliers of the unstable cycles under consideration are defined as the eigenvalues of the products of the Jacobian matrices $\prod_{j = 1}^T {{{f'}_{}}\left({{\eta_j}}\right)}$ of dimensions $m \times m$. As a rule, the cycles $\left({{\eta_1}, \ldots, {\eta_T}}\right)$ of the system (1) are not a priori known as well as the spectrum  $\left\{ {{\mu_1}, \ldots, {\mu_m} } \right\}$ of the matrix $\prod_{j = 1}^T {f'\left({{\eta_j}}\right)}$.

 To stabilize the cycle of the length $T$ Morg\"ul \cite{9,10}
proposed a feedback  control that includes linear and nonlinear elements, i.e., a semilinear feedback control of the form
\begin{equation}\label{(6)}
{u_n} = - \gamma \left({f({x_n}) - {x_{n - T + 1}}}\right),
\end{equation}
for which a corresponding closed-loop system is
\begin{equation}\label{(7)}
x_{n+1}=(1-\gamma) f(x_n) + \gamma x_{n-T+1}, 
\end{equation}
where $\gamma \in \left[ {0, 1}\right)$. On the cycle, the conditions $f({x_n}) = {x_{n + 1}} = {x_{n - T + 1}}$ are fulfilled, therefore, on the cycle ${u_n} \equiv 0$.

Note that in \cite{9} only the scalar case $f: \mathbb{R} \to \mathbb{R}$ was considered. In that case the stability is possible if multiplier (unique and real) belongs to the interval $(-(T/(T-2))^T,1).$ It was shown in \cite{Har} that the stabilization is possible if and only if all multipliers (real as well as complex) are in the region 
$$
W=\left\{w=\left(\frac{T-1}{T-2}\right)^T\frac{(1-\frac z{T-1})^T}z: \; z\in\mathbb D\right\},\qquad \mathbb D=\{z:|z|<1\}.
$$
Our goal is to expand the region $W.$ To do that we suggests to consider a control
$$
u=-(1-\gamma)\epsilon(f(x_n)-f(x_{n-T}))-\gamma\left(\delta_1(f(x_n)-x_{n-T+1})+\delta_2(f(x_{n-T})-x_{n-2T+1})\right),
$$
where $\epsilon\ge 0,$ $\delta_j\ge 0$ and $\delta_1+\delta_2=1.$ Note that in this case on the $T$-cycle  $u\equiv 0$ as it was in Morg\"ul's case. And the close-loop system is
$$
x_{n+1}=(1-\gamma)(a_1f(x_n)+a_2f(x_{n-T}))+\gamma(b_1x_{n-T+1}+b_2x_{n-2T+1}),
$$
where $a_1=\frac{1-\gamma\delta_1}{1-\gamma}-\epsilon,$ 
$a_2=\epsilon-\frac{\gamma\delta_2}{1-\gamma}$ and $\delta_1=b_1,\delta_2=b_2.$
It is clear that $a_1+a_2=b_1+b_2=1.$

If $x_n$ is scalar and $\gamma=0$ then the problem was considered in \cite{6} and if $\gamma=1$ and $\delta_2=0$ it was considered in \cite{9}. 

As well known the investigation of the local stability of cycles is reduced to the verification that the roots of the characteristic polynomials belong to the unit disc in the complex plane (c.f. \cite{14}). In our case the characteristic polynomial is the following (see \cite{Har})
\begin{equation}\label{(9)}
\tilde P(\lambda)=\prod_{j = 1}^m {\left[ \lambda^{2T} {\left({1-\gamma p(\lambda^{-1})}\right){ ^T} - {\mu_j} {{\left({1 - \gamma }\right)}^T} {\lambda ^{2T - 1}}}(q(\lambda^{-1}))^T \right]}, 
\end{equation}
where $\mu_j$ are multipliers of the cycle ($j = 1, \ldots, m$), in general, complex and the polynomials $p$ and $q$ are $p(\lambda^{-1})=b_1\lambda^{-1}+{b_2}\lambda^{-2},$ $q(\lambda^{-1})=a_1+{a_2}\lambda^{-1}.$ 

If the multipliers $\mu { _j}$, $j = 1, \ldots, m$ are determined exactly, then one can check whether the roots belong to the central unit disc by known criteria such as Schur-Cohn, Clark, Jury \cite{14}. However, cycles are not known, hence, multipliers are not known.

In this case, the geometric criterion of A. Solyanik proved to be effective for the stability of cycles of discrete systems \cite{15}. Let us apply this criterion.

Let $\lambda = \frac{1}{z}$ and 
$$\Phi \left(z\right) = {\left({1 - \gamma }\right)^T} \frac{z(q(z))^T}{{{{\left({1 - \gamma p(z) }\right)}^T}}}.$$ 

Then the following lemma is valid (c.f. \cite{HA1}).

\begin{lemma}\label{lemma1}
The polynomial \eqref{(9)} is Schur stable if and only if 
\begin{equation}\label{(10)}
\mu_j\in \left(\overline{\mathbb{C}}\backslash\Phi(\overline{D})\right)^*, \quad j = 1, \ldots, m,
\end{equation}
where $\overline{D} = \left\{z \in \mathbb{C}: \left|z\right|\le 1\right\}$ is a closed central unit disk, $\overline{\mathbb{C}}$ is an extended complex plane, the asterisk denotes the inversion  ${\left(z\right)^*} = \frac{1}{{\bar z}}$. Here $\bar z$ denotes the complex conjugated of $z$.
\end{lemma}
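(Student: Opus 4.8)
The plan is to reduce the Schur stability of the product \eqref{(9)} to a condition on each factor separately, and then to reformulate that condition through the substitution $\lambda = 1/z$ together with the real-coefficient symmetry of $\Phi$. First I would observe that, since $\tilde P(\lambda)$ is a product over $j=1,\dots,m$, it is Schur stable (all roots in the open unit disk) if and only if each factor
$$F_j(\lambda)=\lambda^{2T}(1-\gamma p(\lambda^{-1}))^T-\mu_j(1-\gamma)^T\lambda^{2T-1}(q(\lambda^{-1}))^T$$
is Schur stable. Collecting powers of $\lambda$, the first term equals $(\lambda^2-\gamma b_1\lambda-\gamma b_2)^T$, a genuine polynomial of degree $2T$ with constant term $(-\gamma b_2)^T\neq 0$ (for $T\ge 2$; the case $T=1$ is checked directly), while the second term has degree $2T-1$. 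Hence $F_j$ has degree exactly $2T$ and $F_j(0)\neq 0$, so all its roots are nonzero.

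Next I would pass to the reversed polynomial $G_j(z):=z^{2T}F_j(1/z)$. Because $\deg F_j=2T$ and $F_j(0)\neq 0$, the reversal establishes an exact bijection, counting multiplicities, between the roots $\lambda$ of $F_j$ and the roots $z=1/\lambda$ of $G_j$; therefore $F_j$ is Schur stable if and only if every root of $G_j$ satisfies $|z|>1$, i.e. $G_j$ has no zeros in the closed disk $\overline D$. A direct computation gives $G_j(z)=(1-\gamma p(z))^T-\mu_j(1-\gamma)^T z(q(z))^T$, and dividing by $(1-\gamma p(z))^T$ shows that, away from the poles of $\Phi$, one has $G_j(z)=0\Leftrightarrow \mu_j\Phi(z)=1\Leftrightarrow \Phi(z)=1/\mu_j$. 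Consequently $G_j$ vanishes somewhere in $\overline D$ exactly when $1/\mu_j\in\Phi(\overline D)$, so $F_j$ is Schur stable if and only if $1/\mu_j\notin\Phi(\overline D)$, that is, $1/\mu_j\in\overline{\mathbb C}\setminus\Phi(\overline D)$.

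Finally I would convert this reciprocal condition into the stated inversion. Since $a_i,b_i,\gamma$ are real, $\Phi$ has real coefficients, whence $\Phi(\bar z)=\overline{\Phi(z)}$; as $\overline D$ is symmetric about the real axis, the image $\Phi(\overline D)$ is invariant under complex conjugation. Unwinding the definition of $(\cdot)^*$, the membership $\mu_j\in(\overline{\mathbb C}\setminus\Phi(\overline D))^*$ means $1/\bar\mu_j\notin\Phi(\overline D)$, and by the conjugation symmetry of $\Phi(\overline D)$ this is equivalent to $1/\mu_j\notin\Phi(\overline D)$ — precisely the condition just obtained for factor $j$. Intersecting over $j=1,\dots,m$ then yields \eqref{(10)}, proving the lemma.

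I expect the delicate points to be the bookkeeping at the boundary and at the poles. One must verify that the reversal is a clean multiplicity-preserving bijection, which is exactly why $F_j(0)\neq 0$ is recorded; that $\Phi$ carries no poles into $\overline D$, so that $\Phi(\overline D)$ is a well-defined compact subset of the Riemann sphere and the passage $G_j=0\Leftrightarrow\Phi=1/\mu_j$ does not degenerate on the zero set of the denominator $(1-\gamma p(z))^T$; and that the anti-holomorphic inversion $(\cdot)^*=1/\bar z$ is correctly reconciled with the non-conjugated reciprocal $1/\mu_j$ through the real-coefficient symmetry of $\Phi$. This last conjugation match is the easiest to overlook and the one I would state most carefully.
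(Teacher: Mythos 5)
Your proof is correct, but there is no in-paper argument to compare it against: the authors state the lemma with only a pointer to the reference \cite{HA1} and give no proof, so your write-up in fact supplies the missing details, and it follows the standard root-reversal route that underlies that reference. Concretely: the factorwise reduction, the reversed polynomial $G_j(z)=z^{2T}F_j(1/z)=(1-\gamma p(z))^T-\mu_j(1-\gamma)^T z(q(z))^T$, the identification of zeros of $G_j$ in $\overline{D}$ with solutions of $\Phi(z)=1/\mu_j$, and the use of the real-coefficient symmetry $\Phi(\bar z)=\overline{\Phi(z)}$ to reconcile $1/\mu_j$ with the inversion $1/\bar\mu_j$ are exactly the right steps. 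Two refinements would make it airtight. First, your claim $F_j(0)=(-\gamma b_2)^T\neq 0$ silently assumes $\gamma b_2\neq 0$, which fails in the admissible degenerate cases $\gamma=0$ or $b_2=0$ (Morg\"ul's original scheme), and for $T=1$ the constant term $-\gamma b_2-\mu_j(1-\gamma)a_2$ can vanish for a particular $\mu_j$; the equivalence nevertheless survives, because a root of $F_j$ at $\lambda=0$ is harmless for Schur stability and is precisely what reversal discards: writing $F_j(\lambda)=\lambda^k H(\lambda)$ with $H(0)\neq 0$, the polynomial $G_j$ is the reversal of $H$, and your argument goes through verbatim. Second, the pole bookkeeping you flag at the end is settled by reading $\Phi$ as a map into $\overline{\mathbb{C}}$: at a point $z_0\in\overline{D}$ with $1-\gamma p(z_0)=0$ and $z_0(q(z_0))^T\neq 0$ one has $G_j(z_0)=-\mu_j(1-\gamma)^T z_0(q(z_0))^T$, which vanishes exactly when $\mu_j=0$, i.e. exactly when $1/\mu_j=\infty=\Phi(z_0)$, so the equivalence persists on the sphere; only the non-generic coincidence $q(z_0)=0=1-\gamma p(z_0)$ (a common factor of numerator and denominator) would need separate treatment, and that case is glossed over by the statement of the lemma itself.
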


Let $m$ be a set of localization of multipliers and $M^*$ is inversion to $M.$ It follows from the Lemma 1 that the polynomial is Schur stable if  the $M^*$ is an exceptional set for the image of $\mathbb D$ under the map $\Phi(z).$ This property will be central in the construction of the control coefficients $a_1, a_2,$ $b_1, b_2$ and averaging parameter $\gamma.$

Let us outline a geometric meaning of the Lemma. Namely, in case of no control the system has a form \eqref{(1)}. Thus $\Phi(z)=z.$ Then $\Phi(\bar{\mathbb D})=\bar{\mathbb D}$ and 
$(\mathbb C\backslash \bar{\mathbb D})^*=\mathbb D.$ Thus, we came up with the standard stability condition $\mu_j\in \mathbb D.$ In other words, geometrically the addition of the control means extension of the central unit disc to a set that cover all multipliers.

\section{Conception of further strategy}
Thus, our goal to make the set 
$$
W=\left\{w=\frac{(1-\gamma(b_1z+b_2z^2))^T}{(1-\gamma)^Tz(a_1+a_2z)^T}: \;|z|<1\right\}
$$ as large as possible by choosing control coefficients. In future we will control the linear size of the set along the real line. I.e. we want the set $W$ to cover the interval 
$(-\mu^*,1)$ of the largest length. If the function $1/\Phi(z)$ is typically real in $\mathbb D$ then 
$$
\mu^*=\left(\frac{1-\gamma(b_2-b_1)}{(1-\gamma)(a_1-a_2)}
\right)^{T}$$
In case of $\Phi(z)$ is polynomial one can apply the theory of the polynomials of maximal range by S.Rushenweih \cite{Ru}. By Rushenweih's theory the extremal polynomials should be univalent and zeros of the derivative should lie on the unit disc. However, even for the polynomials the application of this theory is a very tricky task. In the case of stabilization of $T$-cycles these polynomials should be $T$-symmetric. A various sets of polynomials for cycle stabilization can be found in \cite{HA,HA1}.

For the semilinear control it is necessary to construct not polynomials rather rational functions. For those functions the Rushenweih's theory does not have analogy. This is why the consideration of the case $N=2$ is of independent interest.

First of all, the coefficients $a_1$ and $a_2$ can be used as in the non-linear control, namely
$a_1=\frac{T+1}{T+2}$ and $a_2=\frac{1}{T+2}.$

Second, we would like to chose the coefficients $b_1, b_2$ such that the denominator in the formula for $\mu^*$ would be as small as possible, thus $\gamma$ as large as possible to increase the size of the set $W.$

This is why the properties of typically real is a main property. Further analysis indicates that the property of univalency is difficult to verify for rational functions even in simple cases. However, in addition to be typical real our function $\Phi(z)$ will be at same time locally univalent. We will see that the property of typically real makes the formulas much more complicated however improves the size of the region incrementally. In case of local univalency we get
$\mu^*=\left(\frac{T+2}{(1-\gamma) T}\right)^T.$ What left is to find maximal value for $\gamma.$ Note that the expected value $\frac1{T-1}$ is not maximal.

\section{Typically realness}
Our goal is to determine conditions for the function
	\begin{equation}\label{typ}
\Phi(z)=(1-\gamma)^T\frac{z(\frac{T+1}{T+2}+\frac z{T+2})^T}{{{{\left({1 - \gamma(\alpha z+(1-\alpha)z^2) }\right)}^T}}}
\end{equation}
to be typically real in the disc $\mathbb D.$

Let us remind that the function is typically real in $\mathbb D$ if it is real at every real point of the disc and in all the others points of the disc we have
$$
\Im\{f(z)\}\Im\{z\}>0.
$$

We consider the class of functions $f(z), f(0)=0$ that are analytic and typically real in $\mathbb D.$ 

It is clear that  if $f(z)$ is typically real in $\mathbb D$ then its T-symmetrized version $\phi(z)={}^T\hspace{-.2cm}\sqrt{f(z^T)}$ is
typically real in $\mathbb D$. The converse is true as well.

Let us consider a general function with arbitrary polynomials $\tilde q(z), \tilde p(z)$
\begin{equation}\label{Fz}
\tilde\Phi(z)=(1-\gamma)^T\frac{z[\tilde q(z)]^T}{(1-\gamma \tilde p(z))^T},\quad \tilde q(1)=1, \tilde p(0)=0, \tilde p(1)=1
\end{equation}
and assume that $z[\tilde q(z)]^T$ is typically real polynomial.

This function corresponds the associated T-symmetric function 
\begin{equation}\label{phiz}
\tilde F(z)=(1-\gamma)\frac{z[\tilde q(z^T)]}{1-\gamma \tilde p(z^T)}
\end{equation}

Let us define the real functions
$$
C_1(t)=\frac{\Im\{\tilde q(-e^{iTt})e^{it}\}}{\Im\{\tilde  q(-e^{iTt})\tilde p(-e^{-iTt})e^{it}\}}
$$
and
$$
C_2(t)=\frac{\Im\{\tilde  q(e^{iTt})e^{it}\}}{\Im\{\tilde  q(e^{iTt}) \tilde p(e^{-iTt})e^{it}\}}.
$$
on the the sets 
$$
\tau_i=\{t: C_i(t)>0\}\cap (0,\frac\pi T),\qquad i=1,2.
$$
The function $C_1$ (or $C_2$ ) is undefined if the denominator is identically zero. In that case we assume that $\tau_1=\emptyset$ (or $\tau_2=\emptyset.$ )\newpage

\begin{theorem}
Let 
$$
c_1^*=\inf_{t\in\tau_1}C_1(t),\qquad
c_2^*=\inf_{t\in\tau_2}C_2(t),\qquad c^*=\min\{c_1^*,c_2^*\}.
$$

i) If $\tau_1=\tau_2=\emptyset$, then the function $\tilde F(z)$ is typically real for all $\gamma>0.$

ii) If $c^*=0$ then the function $\tilde F(z)$ is typically real for no $\gamma>0.$

iii) If $c^*>0$ then the function $\tilde F(z)$ is typically real for $\gamma\in[0,c^*).$
\end{theorem}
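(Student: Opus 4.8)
The plan is to characterize typical realness of $\tilde F$ through the sign of $\Im\tilde F$ on the boundary circle and then \emph{linearize} the resulting inequality in $\gamma$. First I would pass from the interior condition to a boundary one. Since $\tilde q,\tilde p$ have real coefficients, $\tilde F(\bar z)=\overline{\tilde F(z)}$, so $\tilde F$ is real on $(-1,1)$ and $\Im\tilde F$ is harmonic in $\mathbb D$ and vanishes on the real diameter (for the relevant $\gamma$ one has $1-\gamma\tilde p(z^T)\neq 0$ on $\bar{\mathbb D}$, so $\tilde F$ is genuinely analytic there, a point I would record first). Applying the minimum principle on the upper half-disc, $\Im\tilde F>0$ there is equivalent to $\Im\tilde F(e^{it})\ge 0$ for $t\in(0,\pi)$; thus typical realness reduces to $\Im\tilde F(e^{it})\ge 0$ on $(0,\pi)$, with strict interior positivity following from the strong minimum principle as long as $\Im\tilde F\not\equiv 0$.

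Next I would exploit the $T$-fold symmetry. From $\tilde F(e^{2\pi i/T}z)=e^{2\pi i/T}\tilde F(z)$ together with $\tilde F(\bar z)=\overline{\tilde F(z)}$ one gets the rotation relation $\arg\tilde F(e^{i(2\pi/T+t)})=\tfrac{2\pi}{T}+\arg\tilde F(e^{it})$ and the reflection relation $\arg\tilde F(e^{i(2\pi/T-t)})=\tfrac{2\pi}{T}-\arg\tilde F(e^{it})$, which fold the whole semicircle $(0,\pi)$ onto the fundamental arc $(0,\pi/T)$. Under $z\mapsto z^T$ the two halves of the rotation cell $(0,2\pi/T)$ are carried to the upper and lower unit semicircles, i.e.\ to the arguments $e^{iTt}$ and $-e^{iTt}=e^{iT(t+\pi/T)}$; these two sheets are exactly what produce the two branch functions, the sheet $e^{iTt}$ giving $C_2$ and the sheet $-e^{iTt}$ giving $C_1$. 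I expect this folding to be the main obstacle: one must verify that the positivity requirement on all of $(0,\pi)$ is captured by precisely the two inequalities $\Im\{e^{it}\tilde q(e^{iTt})(1-\gamma\tilde p(e^{-iTt}))\}\ge0$ and $\Im\{e^{it}\tilde q(-e^{iTt})(1-\gamma\tilde p(-e^{-iTt}))\}\ge0$ on $(0,\pi/T)$, which requires careful bookkeeping of how $\arg\tilde F$ winds across consecutive rotation cells.

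On the fundamental arc I would then clear the denominator: multiplying numerator and denominator by $\overline{1-\gamma\tilde p(e^{iTt})}=1-\gamma\tilde p(e^{-iTt})$, the sign of $\Im\tilde F(e^{it})$ equals that of $\Im\{e^{it}\tilde q(\pm e^{iTt})(1-\gamma\tilde p(\mp e^{-iTt}))\}$, since $(1-\gamma)>0$ and $|1-\gamma\tilde p|^2>0$. The decisive feature is that this expression is linear in $\gamma$, namely $N_i(t)-\gamma D_i(t)$, where $N_i,D_i$ are the numerator and denominator defining $C_i$; this linearity is precisely the gain from working with $\tilde F$ rather than with $\tilde\Phi$, whose boundary inequality is of degree $T$ in $\gamma$. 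The hypothesis that $z[\tilde q(z)]^T$ is typically real, together with the $T$-symmetrization observation, forces $N_i(t)\ge 0$ on $(0,\pi/T)$. Hence where $D_i(t)\le 0$ the inequality $N_i-\gamma D_i\ge0$ holds for every $\gamma\ge 0$, while where $D_i(t)>0$ — that is, on $\tau_i$, where automatically $C_i=N_i/D_i>0$ — it is equivalent to $\gamma\le C_i(t)$.

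Finally I would pass to the infimum, turning the requirement into $\gamma\le C_i(t)$ for all $t\in\tau_i$, i.e.\ $\gamma\le c^*=\min\{c_1^*,c_2^*\}$, and read off the three cases. In case (i), $\tau_1=\tau_2=\emptyset$ means $D_i\le0$ everywhere, so $N_i-\gamma D_i\ge0$ for all $\gamma$ and $\tilde F$ is typically real for every $\gamma>0$. In case (ii), if $c^*=0$ then for any $\gamma>0$ there is $t\in\tau_i$ with $C_i(t)<\gamma$, whence $N_i(t)-\gamma D_i(t)=D_i(t)(C_i(t)-\gamma)<0$ and typical realness fails. In case (iii), for $0\le\gamma<c^*$ one has $C_i(t)\ge c^*>\gamma$ on each $\tau_i$, so $N_i-\gamma D_i>0$ there and the inequality is automatic off $\tau_i$; with the strong minimum principle this gives $\Im\tilde F>0$ in the open upper half-disc, hence typical realness on $[0,c^*)$. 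The half-open endpoint merits a remark: at $\gamma=c^*$ the conclusion persists when the infimum is not attained, but if it is attained at an interior $t_0\in\tau_i$ then $\Im\tilde F(e^{it_0})=0$ and interior positivity must be checked separately, so stating the guaranteed range $[0,c^*)$ sidesteps this.
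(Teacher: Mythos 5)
Your proposal rests on the same reduction as the paper's proof---folding onto the fundamental arc $(0,\pi/T)$, where the two sheets $e^{iTt}$ and $-e^{iTt}$ generate exactly the paper's $C_2$ and $C_1$ (the paper reaches the $-e^{iTt}$ form of $C_1$ by applying the substitution $\xi=\pi/T-t$ to the ray-crossing condition \eqref{A})---but you handle the $\gamma$-dependence by a genuinely different mechanism. The paper treats the boundary relations as \emph{equations}: it solves \eqref{A} and \eqref{B} for $\gamma$, so that $C_i(t)$ appears as the parameter value at which the boundary curve touches the rays $\arg w=0$ and $\arg w=\pi/T$, and the conclusion then needs two further ingredients: that the value set of $C_i$ on $\tau_i$ is an interval $(c_i^*,\beta_i)$ (via positivity of the numerator and a classification of discontinuities), plus an implicit continuity-in-$\gamma$ deformation from the typically real case $\gamma=0$. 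You instead keep \emph{inequalities} $N_i(t)-\gamma D_i(t)\ge 0$, exploit their linearity in $\gamma$, and get a pointwise equivalence that passes directly to the infimum. This buys something real: you need neither the interval structure of the value sets nor the deformation argument, and the sufficiency statements (i) and (iii), which the paper essentially asserts in the sentence ``$\tilde F$ is typically real for some $\gamma$ if both equations $C_j(t)=\gamma$ have no solutions,'' become honest proofs. One repair: you derive only $N_i\ge 0$ from typical realness of $z[\tilde q(z)]^T$, but your identification of $\{t: D_i(t)>0\}$ with $\tau_i$, used in all three cases, requires the strict positivity $N_i>0$ on $(0,\pi/T)$, which the paper states explicitly.

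The genuine weak point---in your write-up and in the paper's alike---is the step you defer as ``careful bookkeeping'': that $\Im\tilde F\ge 0$ on all of $(0,\pi)$ is \emph{equivalent} to your two inequalities on $(0,\pi/T)$. The sufficiency direction (the two inequalities confine the image of the fundamental arc to the sector $0\le\arg w\le\pi/T$, and the rotation/reflection relations then propagate positivity to the whole semicircle) is straightforward and is all that cases (i) and (iii) use. Necessity is what case (ii) uses, and it is not pointwise true in general: a strict violation of the $C_2$-inequality does produce a boundary point with $\Im\tilde F<0$ and so genuinely destroys typical realness, but a strict violation of the $C_1$-inequality only places a boundary point below the ray $\arg w=\pi/T$, which for even $T$ is compatible with $\Im\tilde F\ge 0$ on all of $(0,\pi)$ (already for $T=2$, typical realness of the odd function $\tilde F$ constrains $\Im\tilde F$ but not $\arg\tilde F\le\pi/2$). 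So your case (ii) is complete only along the $C_2$ branch; closing the $C_1$ branch requires precisely the assertion the paper also leaves unproved (that \eqref{A} forces failure of typical realness), which traces back to its unproved remark that typical realness of the $T$-symmetrized function implies that of the underlying one.
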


Note that $C_1(\frac\pi T)=1$ implies that $c^*\le 1$ if exists.

\begin{proof} 
The function $\tilde F(z)$ is not typically real if there exist parameters $(\gamma_1,\rho_1,t_1)\in[0,1]\times(0,+\infty)\times (0,\frac\pi T)$  such that
\begin{equation}\label{A}
\frac{e^{it_1}\tilde q(e^{iTt_1})}{1-\gamma_1\tilde p(e^{iTt_1})}=\rho_1 e^{i\frac{\pi}T},
\end{equation}
or there exist parameters $(\gamma_2,\rho_2,t_2)$  such that
\begin{equation}\label{B}
\frac{e^{it_2}\tilde q(e^{iTt_2})}{1-\gamma_2 \tilde p(e^{iTt_2})}=\rho_2.
\end{equation}

The relation \eqref{A} implies that
$$
\gamma_1\tilde p(e^{iTt_1})+\frac1{\rho_1} e^{-i\frac\pi T} e^{it_1}\tilde q(e^{iTt_1})=1
$$
$$
\gamma_1 \tilde p(e^{-iTt_1})+\frac1{\rho_1} e^{i\frac\pi T} e^{-it_1}\tilde q(e^{-iTt_1})=1
$$
and therefore
$$
\gamma_1=\frac{\tilde q(e^{iTt_1})e^{-i\frac\pi T}e^{it_1} -\tilde q(e^{-iTt_1})e^{i\frac\pi T}e^{-it_1}}
{\tilde p(e^{-iTt_1}) \tilde q(e^{iTt_1})e^{-i\frac\pi T}e^{it_1} -\tilde p(e^{iTt_1})\tilde q(e^{-iTt_1})e^{i\frac\pi T}e^{-it_1}}=
$$
$$
\frac{\Im\{\tilde q(e^{iTt_1})e^{-i\frac\pi T}e^{it_1}\}}{\Im\{\tilde p(e^{-iTt_1})\tilde q(e^{iTt_1})e^{-i\frac\pi T}e^{it_1}\}}
$$
Let us introduce the function
$$
C_1(t)=\frac{\Im\{\tilde q(e^{iTt})e^{-i\frac\pi T}e^{it}\}}{\Im\{\tilde p(e^{-iTt})\tilde q(e^{iTt})e^{-i\frac\pi T}e^{it}\}}
$$

Similarly, the condition \eqref{B} implies 
$$
\gamma_2=\frac{\Im\{\tilde q(e^{iTt_2})e^{it_2}\}}{\Im\{\tilde p(e^{-iTt_2})\tilde q(e^{iTt_2})e^{it_2}\}}
$$
and consider the function
$$
C_2(t)=\frac{\Im\{\tilde q(e^{iTt})e^{it}\}}{\Im\{\tilde p(e^{-iTt})\tilde q(e^{iTt})e^{it}\}}
$$ 

The function $\tilde F(z)$ is typically real for some real $\gamma$ if both equations $C_j(t)=\gamma, j=1,2$ do not have solutions on the set $(0,\frac\pi T).$\\

The function $z[\tilde q(z)]^T$ is typically real therefore the functions  $z[\tilde q(-z)]^T,$ $z\tilde q(z^T),$ and $z\tilde q(-z^T)$
are typically real as well.\\

Due to the relations
$$
\frac{\Im\{\tilde q(e^{iTt})e^{-i\frac\pi T}e^{it}\}}{\Im\{\tilde p(e^{-iTt})\tilde q(e^{iTt})e^{-i\frac\pi T}e^{it}\}}=
\frac{\Im\{\tilde q(e^{-iTt})e^{i\frac\pi T}e^{-it}\}}{\Im\{\tilde p(e^{iTt})\tilde q(e^{-iTt})e^{i\frac\pi T}e^{-it}\}}
$$
and after substitution $\xi=\frac\pi T-t,\; \xi\in(0,\frac\pi T)$ one gets
$$
\frac{\Im\{\tilde q(-e^{iT\xi})e^{i\xi}\}}{\Im\{\tilde p(-e^{-iT\xi})\tilde q(-e^{iT\xi})e^{i\xi}\}}.
$$
Therefore, the set of the values of the functions $C_1(t)$ coincides with the set of the values of the function
$$
\frac{\Im\{\tilde q(-e^{iTt})e^{it}\}}{\Im\{\tilde p(-e^{-iTt})\tilde q(-e^{iTt})e^{it}\}}
$$
on the set $t\in(0,\frac\pi T).$\\

The function $\Im\{\tilde q(-e^{iTt})e^{it}\}>0$ for  $t\in(0,\frac\pi T).$ That means that the graph of the function $C_1(t)$ does not intersect the $X$-axis, however can touch it.

The function $C_1(t)$ either continuous on $(0,\frac\pi T)$ or has discontinuity of the second type. This means that  the set of values of the function $C_1(t)$ for $t\in\tau_1$ is an interval $(c_1^*,\beta_1)$ with $c_1^*\le 1$ and $\beta_1$ is either a number or infinity.

In the same way, the function $C_2(t)$ maps the set $\tau_2$ on the interval $(c_2^*,\beta_2)$  and $\beta_2$ is either a number or infinity.

Define now $c^*=\min\{c_1^*,c_2^*\}.$ If this number exists and $c^*=0,$ then the conclusion ii) of the theorem is proved. If $c^*>0,$ then the conclusion iii) of the theorem is proved. If $c^*$ does not exists, then the conclusion i) is valid.

\end{proof}

The subtlety of the situation is in the discontinuity of typically real and univalent properties on parameters in general. However, in our case the set of $\gamma$ where it is typically real and contains zero is connected and contains zero. I.e. if for some $\tilde\gamma$ the function $\tilde \Phi(z)$ is typically real then it stays typically real for all $\gamma\in[0,\tilde\gamma].$\\

The other remark in order is that to restrict ourselves with only function $C_1(t)$ is wrong as demonstrates the example of the function \eqref{typ} with $T=7, \alpha=-0.6$

\section{Examples}

\subsection{Example 1} Let $q(z)\equiv 1,\, p(z)=z.$ For $T=1$ the functions $C_1(t)$ and $C_2(t)$ are undefined. That means that the function
$F(z)=z/(1-\gamma z)$ is typically real in $\mathbb D$ for all $\gamma$ (possibly with the exception of the pole).

For $T\ge2$ we have
$$
C_1(t)=\frac{\Im\{e^{it}\}}{\Im\{-e^{-itT}e^{it}\}}=\frac{\sin t}{\sin(T-1)t},
$$
$$
C_2(t)=\frac{\Im\{e^{it}\}}{\Im\{e^{-itT}e^{it}\}}=-\frac{\sin t}{\sin(T-1)t}.
$$
Note that $C_2(t)<0$ on $(0,\frac\pi T)$ and that the function $C_1(t)$ is increasing, therefore 
$$
c_1^*=\lim_{t\to 0} \frac{\sin t}{\sin(T-1)t}=\frac 1{T-1}.
$$
Thus, the function $F(z)=z/(1-\gamma z)^T$ is typically real for $\gamma\in [0,\frac1{T-1}].$

\subsection{Example 2} Let $q(z)\equiv 1,\, p(z)=z^2.$ In this case
$$
C_1(t)=\frac{\Im\{e^{it}\}}{\Im\{e^{-i2Tt}e^{it}\}}=-\frac{\sin t}{\sin(2T-1)t},\qquad C_2(t)=C_1(t).
$$
For $T=1$ the function $F(z)=z/(1-\gamma z^2)$ is typically real for $\gamma\in[0,1]$ because $C_1(t)\equiv -1<0.$

For $T\ge 2$ we have $\tau_1=\left(\frac\pi{2T-1},\frac\pi{T}\right).$ Numerically we can find $c_1^*$\\

\centerline{
\begin{tabular}{|p{0.15in}||p{0.15in}|p{0.15in}|p{0.32in}|p{0.32in}|p{0.32in}|p{0.32in}|p{0.32in}|p{0.32in}|p{0.32in}|} \hline 
$T$ & 2  & 3  & 4 & 5 & 6 & 7 & 8 & 9 & 10 \\ \hline 
$c_{1}^{*} $ & 1 & 0.8 & 0.613 & 0.490 & 0.407 & 0.347 & 0.302 & 0.268 & 0.240 \\ \hline 
\end{tabular}
}
\medskip

The function $F(z)=z/(1-\gamma z^2)^T$ is typically real for $\gamma\in [0,c_1^*(T)].$

\subsection{Example 3} Let $q(z)=\frac{T+1}{T+2}+\frac1{T+2}z, \, p(z)=\alpha z+(1-\alpha)z^2,\; \alpha\in[0,1].$ In this case
$$
C_1(t)=\frac{(T+1)\sin t-\sin(T+1)t}{\alpha\sin t + (\alpha T+1)\sin(T-1)t -(T+1)(1-\alpha)\sin(2T-1)t}
$$
and
$$
C_2(t)=\frac{(T+1)\sin t+\sin(T+1)t}{\alpha\sin t - (aT+1)\sin(T-1)t -(T+1)(1-\alpha)\sin(2T-1)t}
$$
Let us consider the behavior of the function $C_1(t)$ in the neighborhood of zero
$$
C_1(t)=\frac{\frac16 T(T+1)(T+2)t^3+o(t^4)}{T^2(3\alpha-2)t+T[T^3(\frac43-\frac32\alpha)+T^2(-\frac56+\frac76\alpha)+T(-\frac12+\frac12\alpha)+(\frac13 -\frac23\alpha)]t^3+o(t^4)}
$$
If $\alpha>\frac23$ then in the neighborhood of zero $C_1(t)>0$ and $\lim_{t\to 0}C_1(t)=0.$ Therefore for $\alpha>\frac23$ the function \eqref{typ}
won't be typically real for any $\gamma\not=0.$

If $a=\frac23$ then the function $C_1(t)$ is increasing on $(0,\frac\pi T)$ and
$$
\lim_{t\to 0}C_1(t)=\frac{3(T+1)(T+2)}{(T-1)(6T^2+5T+2)}. 
$$
Therefore
$$
c_1^*=\frac{3(T+1)(T+2)}{(T-1)(6T^2+5T+2)}. 
$$
The function $C_2(t)<0$ for all $T>1$ and  $t\in(0,\frac\pi T).$ For $T=1$ the function $C_2(t)$ is undefined. 

Thus the function \eqref{typ} is typically real for $\gamma\in[0,c_1^*].$\\

\medskip

Now, let $a<\frac23.$ For each such $a$ and for $T=3,4,...$ the boundary for the parameter $\gamma$ for which the function $F(z)$ is typically real can be determined numerically, using the theorem. 

Let us show several first values of $c^*(T)$ for $\alpha=\frac12.$\\

\centerline{
\begin{tabular}{|p{0.15in}||p{0.32in}|p{0.32in}|p{0.32in}|p{0.32in}|p{0.32in}|p{0.32in}|p{0.32in}|p{0.32in}|p{0.32in}|} \hline 
$T$ & 3 & 4 & 5 & 6 & 7 & 8 & 9 & 10\\ \hline 
$c_{1}^{*} $ & .6415 & .4364 & 0.3247&.2570  & .2119 &.1800  & .1561 & .1378   \\ \hline 
\end{tabular}
}
 
 \section{Local univalence}

In this section we impose an additional requirement of the local univalncy on the function \eqref{typ} and compare how much change appears for the region
$$
W=\frac1{\Phi(\mathbb D)}.
$$
 
 Recall that the function $f(z)$ is locally univalent in the unit disc $\mathbb D$ if $f^\prime(z)\not=0$ in $\mathbb D.$ Since the condition $\alpha\le\frac23$ is necessary for the typicall reallness we will consider only that choice of $\alpha$.
\newpage

\begin{theorem}
A function \eqref{typ} is locally univalent in $\mathbb D$ if and only if $0\le\gamma\le\gamma_0$ where
$$
\gamma_0=\frac12\left( \frac{(T+1)((2T^2+1)(1-\alpha)-1)}{(T-1)^2(1-\alpha)(\alpha(T+2)-1)} -\sqrt{\mathcal D}\right),
\quad\alpha>\frac1{T+2}.
$$
$$
\gamma_0= \frac{(T+1)(T+2)}{(2T^2+1)(T+1)-(T+2)},\qquad \alpha=\frac1{T+2}.
$$
$$
\gamma_0=\frac12\left( \frac{(T+1)((2T^2+1)(1-\alpha)-1)}{(T-1)^2(1-\alpha)(\alpha(T+2)-1)} +\sqrt{\mathcal D}\right),
\qquad \alpha<\frac1{T+2}.
$$
Above
$$
\mathcal D=\frac{(T+1)^2\left(((2T^2+1)(1-\alpha )-1)^2-4(T-1)^2(1-\alpha)(\alpha(T+2)-1)\right)}{\left[(T-1)^2(1-\alpha)(\alpha(T+2)-1)\right]^2}
$$
\end{theorem}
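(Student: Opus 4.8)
The plan is to reduce the local univalence of \eqref{typ} to a root‑location problem for an explicit cubic, and then to locate the critical parameter by following the zeros of that cubic as $\gamma$ grows from $0$. First I would compute the logarithmic derivative. Writing $q(z)=\frac{(T+1)+z}{T+2}$ and $p(z)=\alpha z+(1-\alpha)z^2$, one finds
$$\frac{\Phi'(z)}{\Phi(z)}=\frac1z+\frac{T}{(T+1)+z}+\frac{T\gamma p'(z)}{1-\gamma p(z)},$$
so that
$$\Phi'(z)=(1-\gamma)^T\,\frac{q(z)^{T-1}}{(1-\gamma p(z))^{T+1}}\cdot\frac{g(z)}{T+2},$$
where the prefactor is zero‑free in $\mathbb D$ (its only possible zero $z=-(T+1)$ lies outside the disc). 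Hence $\Phi'$ vanishes in $\mathbb D$ exactly where the cubic
$$g(z)=(T+1)+(T+1)\bigl(1+\gamma\alpha(T-1)\bigr)z+\gamma\bigl[(2T^2+T-1)-\alpha T(2T+1)\bigr]z^2+\gamma(1-\alpha)(T-1)z^3$$
vanishes, and local univalence is equivalent to $g$ having no zero in the open disc $\mathbb D$. The coefficients $c_0,\dots,c_3$ of $g$ are real, so its zeros are symmetric about the real axis.

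Next I would track the zeros of $g$ as $\gamma$ increases from $0$. At $\gamma=0$ the cubic collapses to $g(z)=(T+1)(1+z)$, whose only zero sits at $z=-1\in\partial\mathbb D$; as $\gamma$ turns on, $g$ becomes genuinely cubic. Evaluating at the real boundary points gives $g(1)=2(T+1)+\gamma\bigl[2(T^2+T-1)-\alpha T(T+2)\bigr]>0$ and $g(-1)=\gamma T^2(2-3\alpha)$, which is positive for $\alpha<\frac23$ (the range already forced by typical realness, established in the previous section). Thus for small $\gamma>0$ the real zero leaves the disc through $z<-1$, while the remaining two zeros form a conjugate pair of large modulus descending from infinity (their modulus is $O(\gamma^{-1/2})$ since $c_3\to0$), and no zero can cross $\partial\mathbb D$ at $z=\pm1$. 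Consequently the loss of local univalence occurs precisely when that conjugate pair first reaches the unit circle.

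I would then impose $g(e^{\pm i\theta})=0$. Factoring $g(z)=c_3(z^2-2\cos\theta\,z+1)(z-r)$ and matching coefficients gives $r=-c_0/c_3$, $2\cos\theta=(c_3-c_1)/c_0$, and the elimination identity
$$c_3^2-c_1c_3+c_0c_2-c_0^2=0.$$
Substituting the explicit $c_i$ turns this into a quadratic in $\gamma$,
$$(1-\alpha)(T-1)^2\bigl(\alpha(T+2)-1\bigr)\gamma^2-(T+1)\bigl((2T^2+1)(1-\alpha)-1\bigr)\gamma+(T+1)^2=0,$$
whose roots are exactly the expressions in the statement, with $\mathcal D$ the normalized discriminant. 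The sign of the leading coefficient is governed by $\alpha\gtrless\frac1{T+2}$: for $\alpha>\frac1{T+2}$ both roots are positive and $\gamma_0$ is the smaller, selecting $-\sqrt{\mathcal D}$; for $\alpha<\frac1{T+2}$ the roots straddle $0$ and the unique positive root selects $+\sqrt{\mathcal D}$; the borderline $\alpha=\frac1{T+2}$ degenerates the quadratic to a linear equation, producing the middle formula $\gamma_0=\frac{(T+1)(T+2)}{(2T^2+1)(T+1)-(T+2)}$.

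The hard part is not the algebra but the two structural verifications underlying the third step. I would need to show (a) that the conjugate pair's modulus is monotone decreasing in $\gamma$, so that $\gamma_0$ is genuinely the first crossing and the pair neither re‑enters nor is univalence lost earlier through the real zero; and (b) that at $\gamma=\gamma_0$ the admissibility constraints $\mathcal D\ge0$, $|2\cos\theta|=|(c_3-c_1)/c_0|\le2$, and $|r|=|c_0/c_3|>1$ hold, guaranteeing that the crossing point $e^{\pm i\theta}$ genuinely lies on $\partial\mathbb D$ and that the third (real) zero stays outside $\overline{\mathbb D}$. Establishing the monotonicity in (a)---most cleanly via the sign of $\tfrac{d}{d\gamma}|z(\gamma)|^2$ along the conjugate branch, or equivalently by a continuity/argument‑principle count of the zeros of $g$ inside $\mathbb D$---is the step I expect to require the most care.
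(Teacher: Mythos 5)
Your proposal follows essentially the same route as the paper: the same cubic (the paper's $\Phi_2$, obtained there by differentiating $\Phi_1=q/(1-\gamma p)$ directly rather than via the logarithmic derivative), the same continuity/root-tracking argument ruling out boundary crossings at $z=\pm1$ so that the first crossing must be the conjugate pair, and the same Vieta-type elimination yielding the identical quadratic in $\gamma$ with the identical root selection governed by the sign of $\alpha(T+2)-1$. The two structural verifications you flag as the hard remaining work are treated in the paper only by the same informal continuity reasoning (supplemented by a Hurwitz-criterion remark that all roots of $\Phi_2$ have negative real part), so your sketch is, if anything, more explicit about those gaps than the published proof.
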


\begin{proof}
Let us write the  function \eqref{Fz} in the form
 $$
 \Phi(z)=(1-\gamma)^Tz(\Phi_1(z))^T.
 $$
Because $\Phi(z)$ has no roots in $\mathbb D$ then $\Phi^\prime(z)=0$ means $\Phi_1(z)+Tz\Phi^\prime_1(z)=0.$ Since 
$$
\Phi^\prime_1(z)=\frac{q^\prime(z)(1-\gamma p(z))+\gamma q(z)p^\prime(z)}{(1-\gamma p(z))^2}
$$ 
then $\Phi^\prime(z)=0$ means $q(z)+Tzq^\prime(z)+\gamma(Tzq(z)p^\prime(z)-Tzq^\prime(z)p(z)-q(z)p(z))=0.$

In the case $q(z)=\frac{T+1}{T+2}+\frac1{T+2}z, \, p(z)=\alpha z+(1-\alpha)z^2,\; \alpha\in[0,1].$ 
Denote 
$$
\Phi_2(z)\equiv z^3\gamma(T-1)(1-\alpha)+z^2\gamma(T(2T+1)(1-\alpha)-1)+
$$
$$
z(T+1)(\gamma \alpha(T-1)+1)+T+1
$$
then
\begin{equation}\label{F'}
\Phi^\prime(z)=0\Leftrightarrow \Phi_2(z)=0.
\end{equation}
The case of interest i $a\le\frac23$ and $T>1.$ In this case  all coefficients of the polynomial in $\Phi_2(z)$ are positive.

Beside that
$$
\gamma(T(2T+1)(1-\alpha )-1)\cdot(T+1)(\gamma \alpha (T+1)+1)-\gamma(T-1)(1-\alpha )(T+1)\ge
$$
$$ 
\gamma(T+1)((T(2T+1)(1-\alpha )-1)-(T-1)(1-\alpha ))=
$$
$$
\gamma(T+1)(2T^2(1-\alpha )-\alpha )\ge\gamma(T+1)(T^2-1)>0.
$$
By the Gauss-Hurwitz criterion all roots of the equation \eqref{F'} have negative real part.\\

Let us compute $\Phi_2(-1)=T^2\gamma(2-3\alpha).$ Under our assumptions $\Phi_2(-1)>0.$ That means that for $\gamma=0$ the polynomial $\Phi_2(z)$ has a unique root $z_0=-1.$ With $\gamma$ increasing that  real root is moving along the real axis being less then -1 and two additional roots emerge  from $-\infty.$  With further increase  of $\gamma$ all three roots will be continuously depending on the parameter $\gamma.$  Let us find a critical value $\gamma_0$ that provide at least one of the root to be on the boundary of $\mathbb D.$ That might happen only when the pair of the conjugate roots appear on the boundary. In that case the real root necessary to be 
$$
z_{\gamma_0}=-\frac{T+1}{\gamma_0(T-1)(1-\alpha)}
$$
by the Vieta theorem. That implies that $\Phi_2(z_{\gamma_0})=0.$ Thus the parameter $\gamma_0$ has to satisfy the equation
$$
\gamma_0^2-\frac{(T+1)((2T^2+1)(1-\alpha)-1)}{(T-1)^2(1-\alpha)(\alpha(T+2)-1))}\gamma_0+
\frac{(T+1)^2}{(T-1)^2(1-\alpha)(\alpha(T+2)-1)}=0.
$$
Let us note that $(2T^2+1)(1-\alpha)-1>0$ if $\alpha\le2/3$ and $T>1,$ and $\alpha(T+2)-1>0$ for $\alpha>\frac1{T+2}.$

Let us find the discriminant of this equation
$$
\mathcal D=\frac{(T+1)^2\left(((2T^2+1)(1-\alpha)-1)^2-4(T-1)^2(1-\alpha)(\alpha(T+2)-1)\right)}{\left[(T-1)^2(1-\alpha)(\alpha(T+2)-1)\right]^2}
$$
or
$$
\mathcal D=\frac{(T+1)^2}{\left[(T-1)^2(1-\alpha)(\alpha(T+2)-1)\right]^2}\left[ (2-3\alpha)^2+\right.
$$
$$
\left. 4T(T-1)(1-\alpha)[2-3\alpha+T((T-1)(1-\alpha)+2-3\alpha)]\right].
$$
Thus, miracally $\mathcal D$ is positive for $T>1$ and $\alpha\le2/3.$\\

If $\alpha>\frac1{T+2}$ then the minimal root of the equation is
$$
\gamma_0=\frac12\left( \frac{(T+1)((2T^2+1)(1-\alpha)-1)}{(T-1)^2(1-\alpha)(\alpha(T+2)-1)} -\sqrt{\mathcal D}\right).
$$
If $\alpha=\frac1{T+2}$ then
$$
\gamma_0= \frac{(T+1)(T+2)}{(2T^2+1)(T+1)-(T+2)}.
$$
If $\alpha<\frac1{T+2}$ then the minimal root of the equation is
$$
\gamma_0=\frac12\left( \frac{(T+1)((2T^2+1)(1-\alpha)-1)}{(T-1)^2(1-\alpha)(\alpha(T+2)-1)} +\sqrt{\mathcal D}\right).
$$
\end{proof}

The figure below displays the image of the unit disc under the map $\Phi(z)$ for $\alpha=\frac1{T+2}$ (red), $\alpha<\frac1{T+2}$ (green) and $\alpha>\frac1{T+2}$ (blue) for  $\gamma=\gamma_0.$

\centerline{
\includegraphics[scale=0.25]{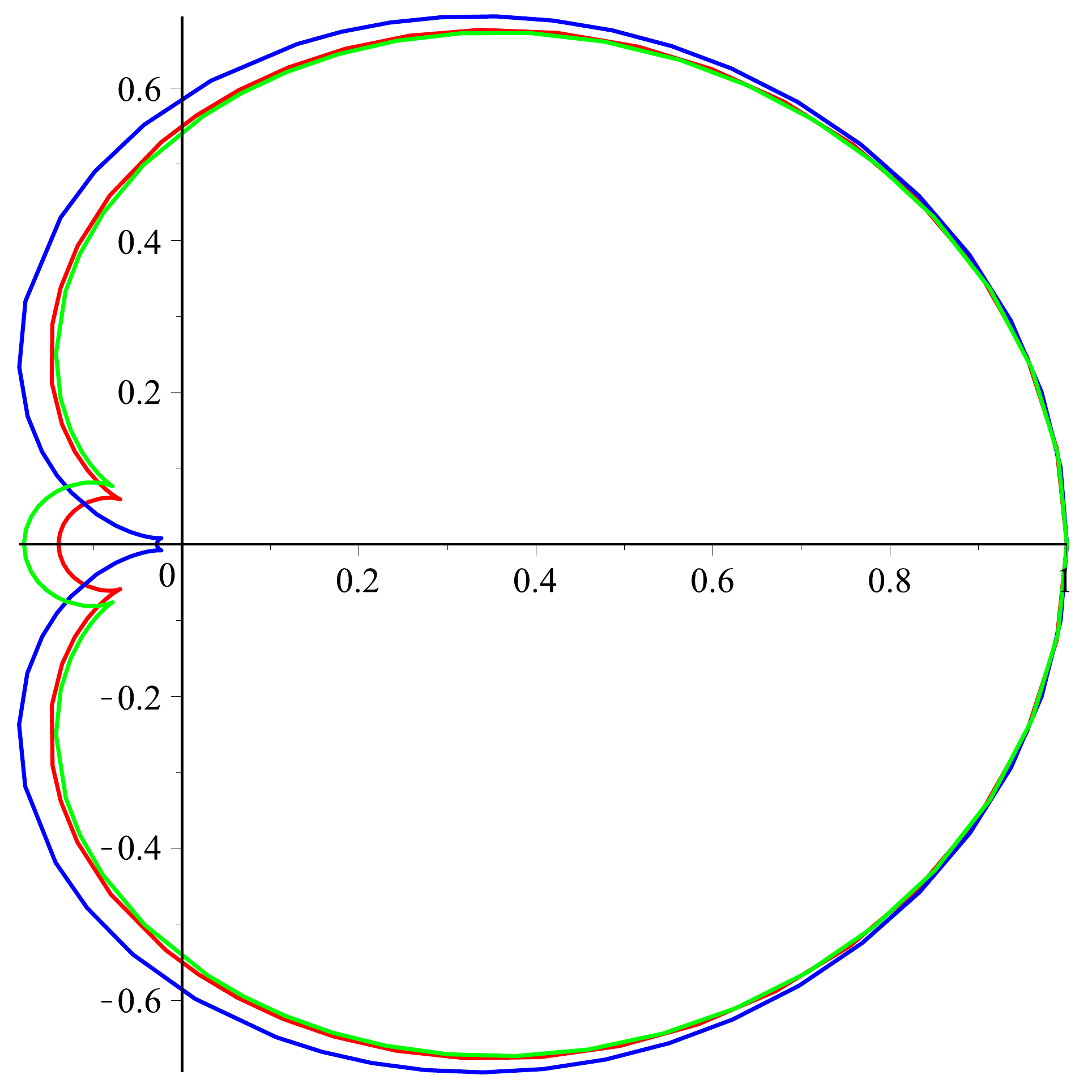}}

The images of $(\mathbb C \backslash \Phi(\mathbb D))^*$ is displayed below.

\centerline{
\includegraphics[scale=0.25]{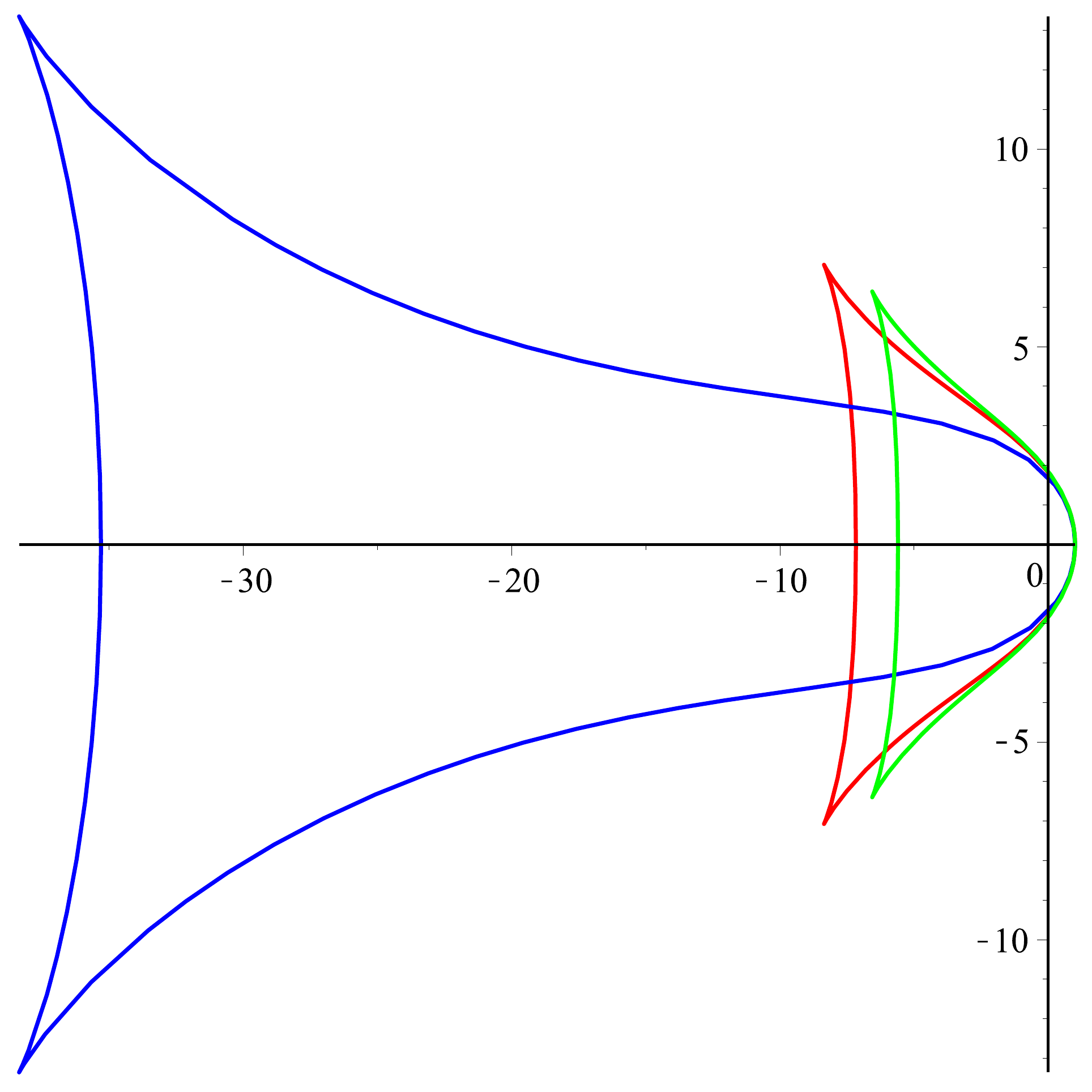}}

The numeric computations demonstrates that for all $T>1$ the value $F(-1)$ for $\alpha=\frac1{2}$
is strictly larger then for $\alpha=\frac1{T+2}.$ Therefore we will be focused on the case $\alpha>\frac1{T+2}.$\\

Let us compute the asymptotic of $\gamma_0$ for large $T$
$$
\gamma_0\sim \frac1{2(1-\alpha)}\frac1T+o\left(\frac1T\right)
$$
and
$$
F(-1)\sim \frac{\left(1-\frac1{2(1-\alpha)}\frac1T\right)^T \left(\frac T{T+2}\right)^T }{\left(1-\frac1{2(1-\alpha)}\frac1{T(1-2\alpha)}\right)^T}\sim\frac {e^{-\frac1{2(1-\alpha)}}e^{-2}}{e^{-\frac{1-2\alpha}{2(1-\alpha)}}}=
e^{-\frac{\alpha}{1-\alpha}-2}.
$$
That implies that the boundary for the multiplier can be chosen as 
$$
\mu^*\sim e^{\frac{\alpha}{1-\alpha}+2}<e^4.
$$

\section{Discussion on the results}
Thus, we have established an algorithm for computing the values of $c^*$ and $\gamma_0,$ that depends on $\alpha$ and T and such that for $\gamma\in [0,c^*)$ the function $\Phi(z)$ is typically real while for $\gamma\in [0,\gamma_0)$ is locally univalent.

Let us compare these values for $\alpha=2/3$ and for $\alpha=1/2.$ Parallel let us compute the corresponding values for $\mu^*.$ For comparison let us show the value of the quantity $(T/(T-2))^T$ which is a boundary of the maximal interval by Morg\"ul's method.

For $a=2/3$\\

\centerline{
\begin{tabular} {|p{0.5in} || p{0.5in} | p{0.5in}| p{0.5in}|p{0.5in}|}\hline 
T & $\gamma_0$  & $c^*$  &$\mu^*$&$(\frac T{T-2})^T$ \\ \hline 
3 & 0.85714 & 0.42254 & 35.698 & 27\\ \hline 
4 &0.55556 &0.25424 & 22.661 & 16\\ \hline 
5  & 0.40909& 0.17797& 19.113& 12.56\\ \hline
6   & 0.32308& 0.13548& 17.543& 11.39\\ \hline
7    & 0.26667 & 0.10876& 16.685& 10.54\\ \hline
8     & 0.22689& 0.09054& 16.156& 9.99\\ \hline
9      & 0.19737& 0.07739&15.803& 9.60\\ \hline
10       & 0.17460& 0.06748&15.555&9.31 \\ \hline
\end{tabular}
}
\bigskip

For $a=1/2.$\\

\centerline{
\begin{tabular} {|p{0.5in} || p{0.5in} | p{0.5in}| p{0.5in}|}\hline 
T & $\gamma_0$  & $c^*$ &$\mu^*$ \\ \hline  
3 & 0.49194 & 0.64145 & 101.640 \\ \hline 
4 & 0.33567& 0.43635 & 50.157\\ \hline 
  5& 0.25365& 0.32474 & 38.309\\ \hline
   6& 0.20343&  0.25703& 33.404\\ \hline
    7& 0.16962&  0.21193& 30.766\\ \hline
     8& 0.14536&  0.17993& 29.140\\ \hline
      9& 0.12711&  0.15613& 28.046\\ \hline
       10& 0.11291&  0.13777& 27.264\\ \hline
\end{tabular}
}
\bigskip

The analysis of the obtained values leads to the following conclusions: for $a=2/3$ the typical realness is violated earlier then the local univalency. At the same time the quantity $\mu^*$ is larger then in Morgul's method about 1.4 times.; for $a=1/2$ the local univalency is violated earlier then the typical realness and $\mu^*$ is larger then in Morgul's case about 3 times. \\

Now, to compare how changes the region $W$ with loosing of univalency (local univalency) let us consider the following problem: for each T find the value $\alpha^*$ such that for this value of $\alpha$ one gets $c_0=\gamma^*.$ Let us show the some numeric solutions to this problem and for the given values $\alpha^*$ let us compute $\mu^*.$\\

\centerline{
\begin{tabular} {|p{0.5in} || p{0.5in} | p{0.5in}| p{0.5in}|}\hline 
T & $\alpha^*$  & $\gamma^*$  &$\mu^*$\\ \hline 
3 & 0.5731 & 0.6932 & 95.521 \\ \hline 
4 &0.5697 &0.4017 & 49.122\\ \hline 
5  & 0.5655& 0.2979& 38.186\\ \hline
6   & 0.5619& 0.23588& 33.511\\ \hline
7    & 0.5589 & 0.1946& 30.967\\ \hline
8     & 0.5565& 0.1655& 29.385\\ \hline
9      & 0.5545& 0.1438&28.313 \\ \hline
10       & 0.5527& 0.1271&27.542 \\ \hline
\end{tabular}
}
\bigskip
As one can see from the table for small T the "optimal" (green) value for $\mu^*$ is even slightly worse then for $\alpha=1/2.$

\centerline{
\includegraphics[scale=0.25]{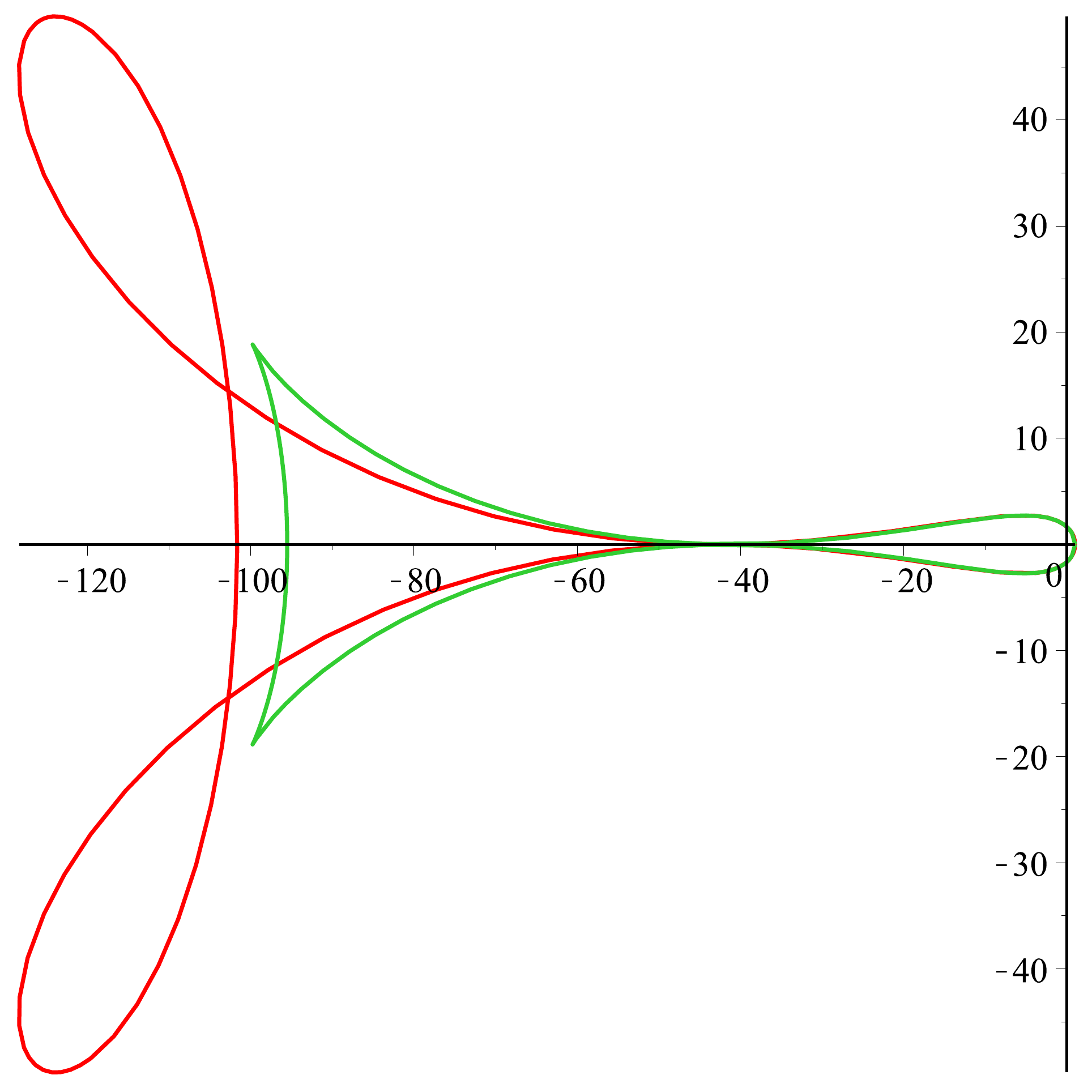}}
\centerline{Inverse image of $\partial \mathbb D$ for $\alpha=1/2$ (red) and for $\alpha^*$ (green), $T=3; a=0.5730949$ }

\centerline{
\includegraphics[scale=0.25]{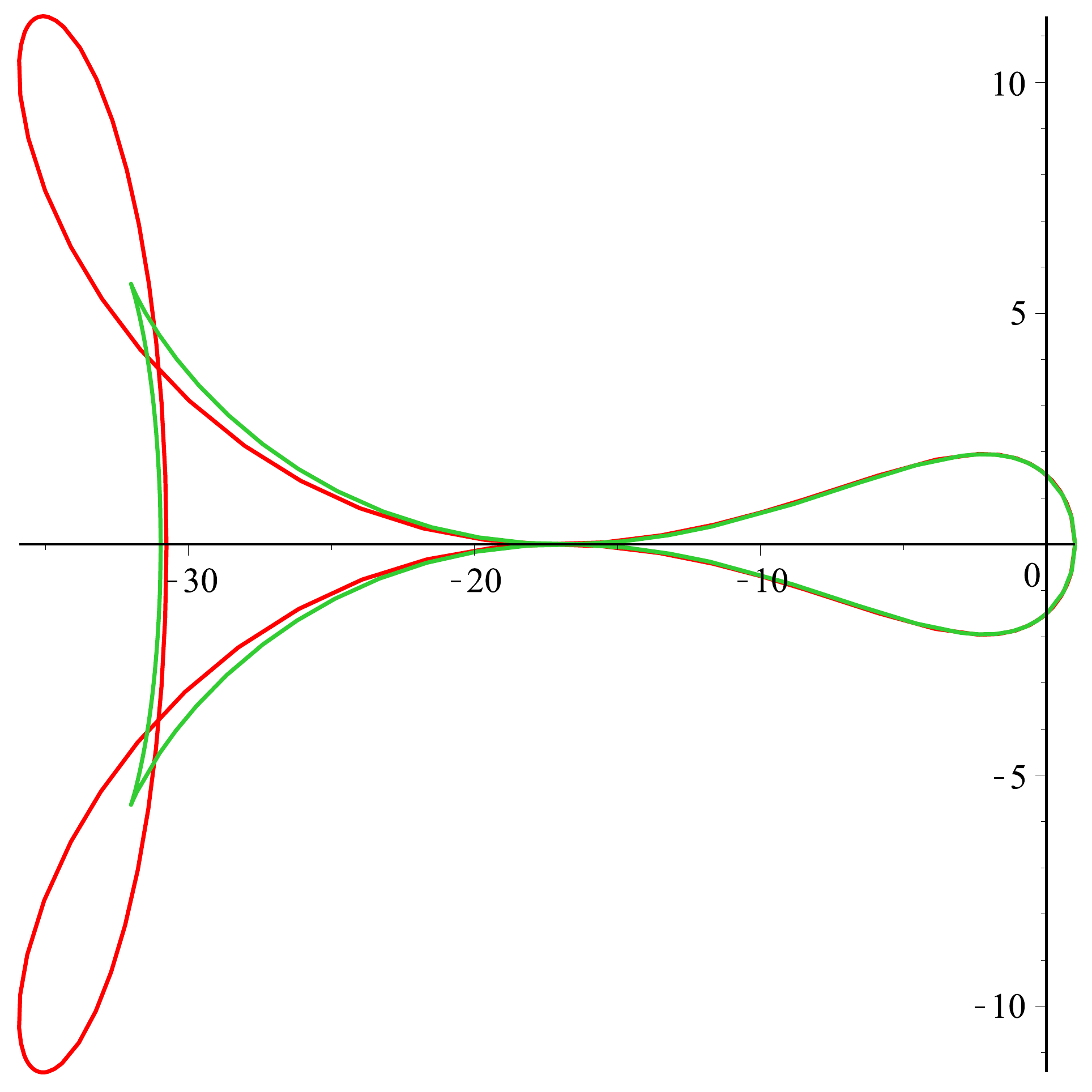}}

\centerline{Inverse image of $\partial \mathbb D$ for $\alpha=1/2$ (red) and for $\alpha^*$ (green), $T= 7; a= 0.558936$ }

\bigskip

Note, that the interior of the read curves without loops is an inverse image of the unit disc $\mathbb D.$ We measure the linear size of the multipliers location region along the horizontal axe and the difference is evident from the pictures.\\

Further analysis of the quantities $\alpha^*$ and $\mu^*$ for large T indicates that $\alpha^*\approx \frac{\pi-2}{\pi-1}$ and 
$\mu^*\approx e^\pi.$ Thus, the method suggested in the article allows to stabilize the cycles with multipliers in the regions of linear size larger then in the Morgul's case in approximately $e^{\pi-2}$ times, i.e. more then 3 times.

Additional analysis indicates that for $\alpha=1/2$ as $c^*$ one can take the value $\frac{1.13}{T-1}.$ The following picture displays the graphs of 
$$
\left(\frac{T+2}T\cdot \frac1{1-\frac{1.13}{T-1}}\right)^T\sim e^{3.13} \qquad \mbox{in black and yellow}
$$
$$
\left(\frac T{T-2}\right)^T \sim e^{2}\qquad\mbox{in blue and yellow}
$$
\centerline{
\includegraphics[scale=0.35]{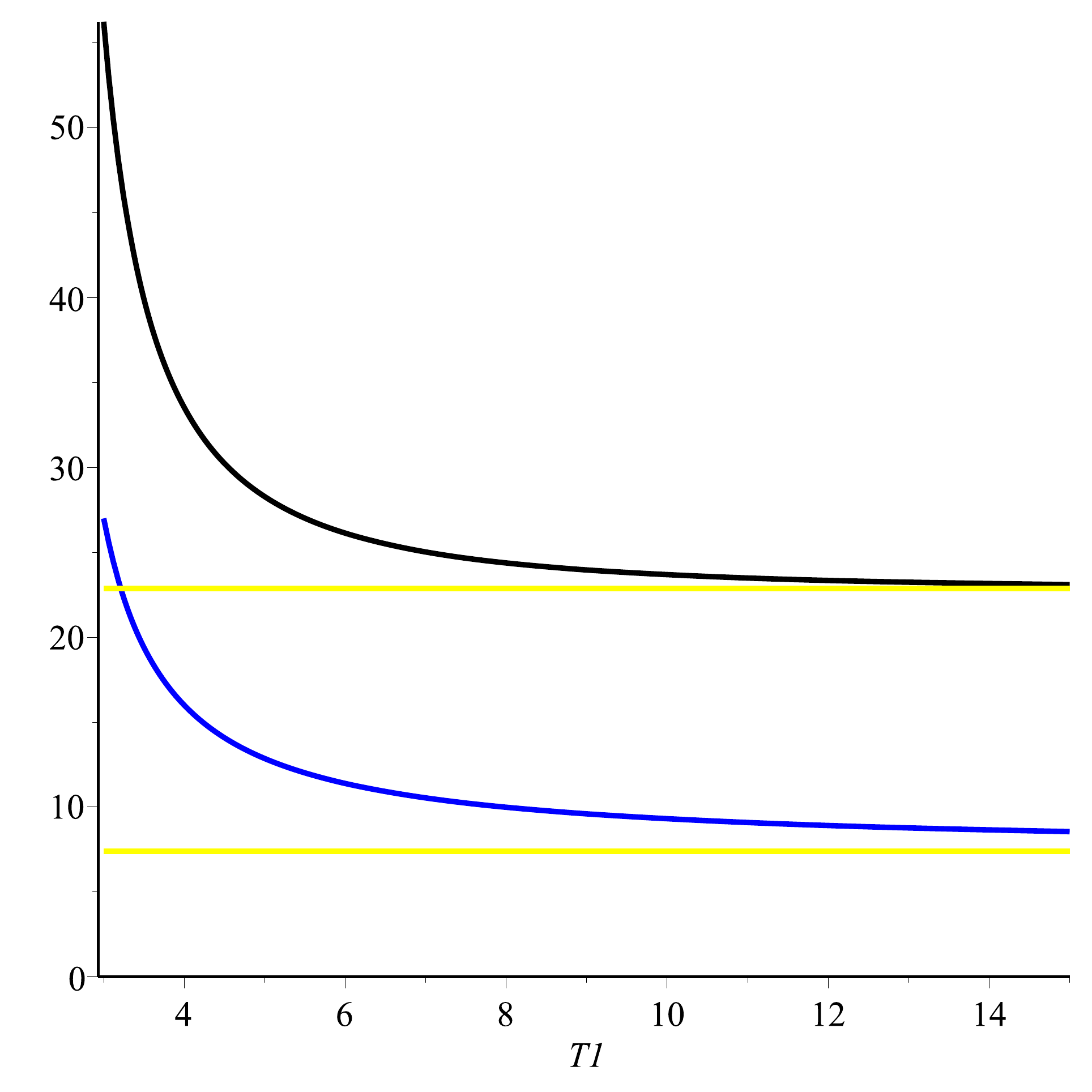}}

\bigskip

The above diagram illustrates that our modification of Morg\'ul's appraoach increases initial region of multipliers locations more then in 3 times.
\bigskip


\section{Numeric computations}

\subsection{Logistic map}
The simplest example for the illustartion of our approach is logistic map $x_{n+1}=\mu x_n(1-x_n)$. For $\mu=4$ Morg\"ul's method allows to find the cylcles of the length 1,2,3 \cite{9} while our approach works for 4,5 and 6. For some other $\mu<4$ Morg\"ul was able to find a cycle of length 20 \cite{10}, our approach allows to find 20-cycles for a wider range of $\mu.$\\

Much better effect of application of our method for stabilization of the cycles is obtained for vector systems. Below we demostrates the application of our method to some famouse maps from nonlinear physics.

\subsection{Elhaj-Sprott map} 
$$
x_{n+1}=1-4\sin(x_n)+0.9y_n,\quad y_{n+1}= x_n.
$$
Below the grey points are of chaotic attarctor while black points are points of the cycle. These cycles are non-stable in the open loop system and locally asymptotically stable in the close loop system.

\centerline{
\includegraphics[scale=0.25]{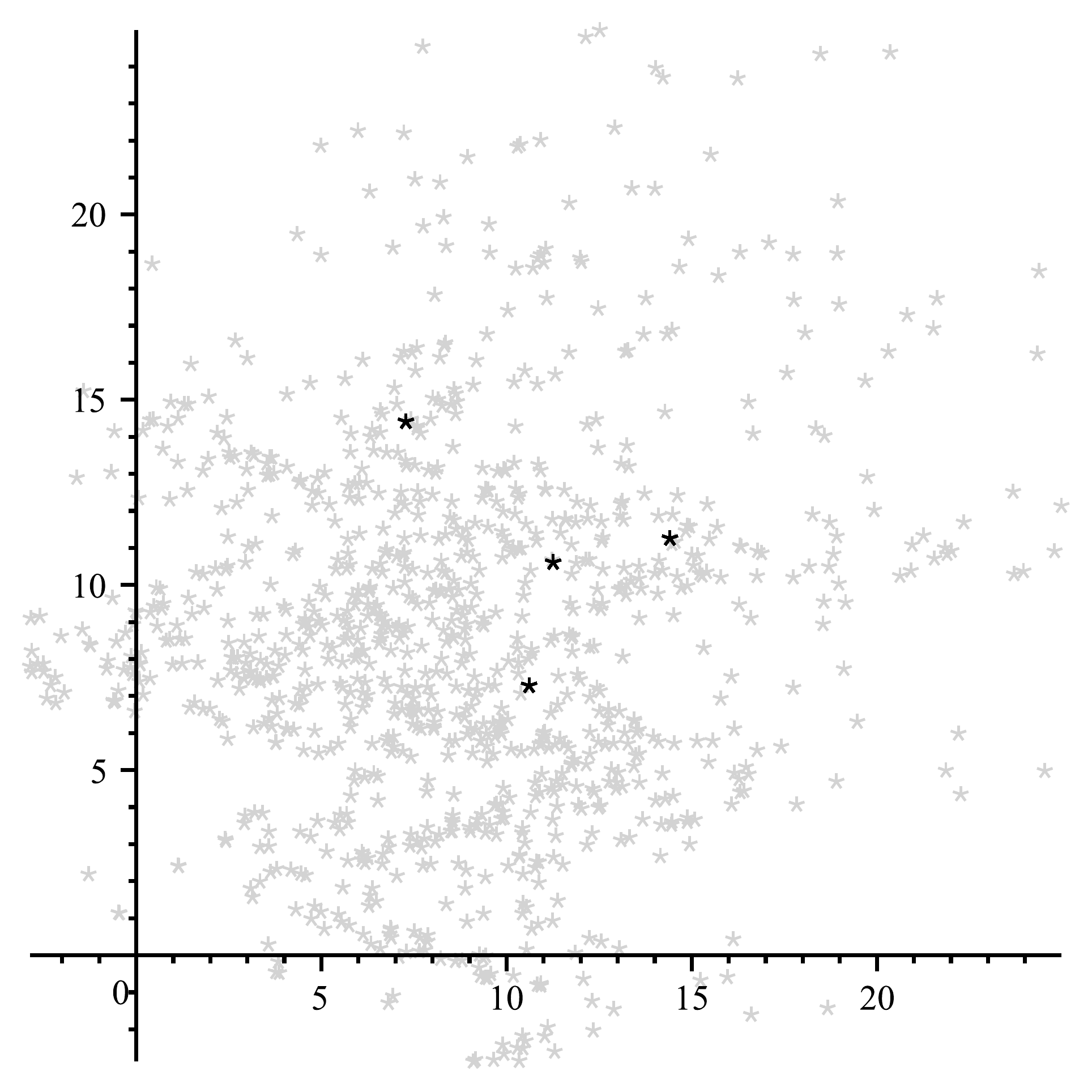}}

\centerline{The plot of 4-cycle for Elhaj-Sprott map  $\gamma=0.25$}
\medskip

\subsection{Ikeda map} 
$$
\begin{cases}
x_{n+1}=&1+0.9\left(x_n\cos\left(0.4-\frac6{1+x_n^2+y_n^2}\right) - y_n \sin\left(0.4-\frac6{1+x_n^2+y_n^2}\right) \right),
\\
y_{n+1}=& 0.9\left(x_n\sin\left(0.4-\frac6{1+x_n^2+y_n^2}\right) + y_n \cos\left(0.4-\frac6{1+x_n^2+y_n^2}\right) \right),
\end{cases}
$$


\centerline{
\includegraphics[scale=0.35]{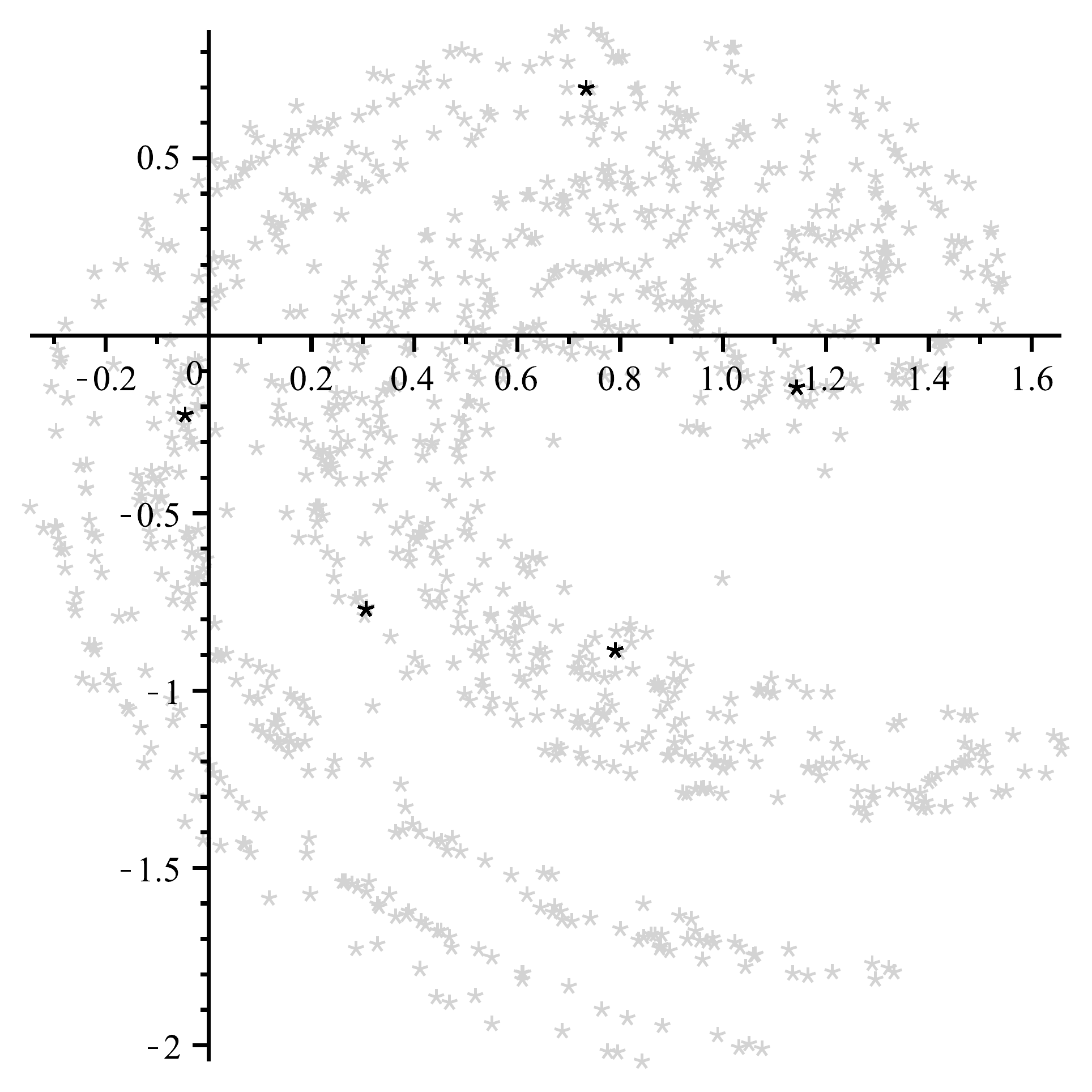}}

\centerline{The plot of 5-cycle for Ikeda map  $\gamma=0.3247$}

\medskip

\subsection{Holms cubic map}
$$
\begin{cases}
x_{n+1}=y_n\\
y_{n+1}=-0.2x+2.77y-y^3.
\end{cases}
$$

\centerline{
\includegraphics[scale=0.35]{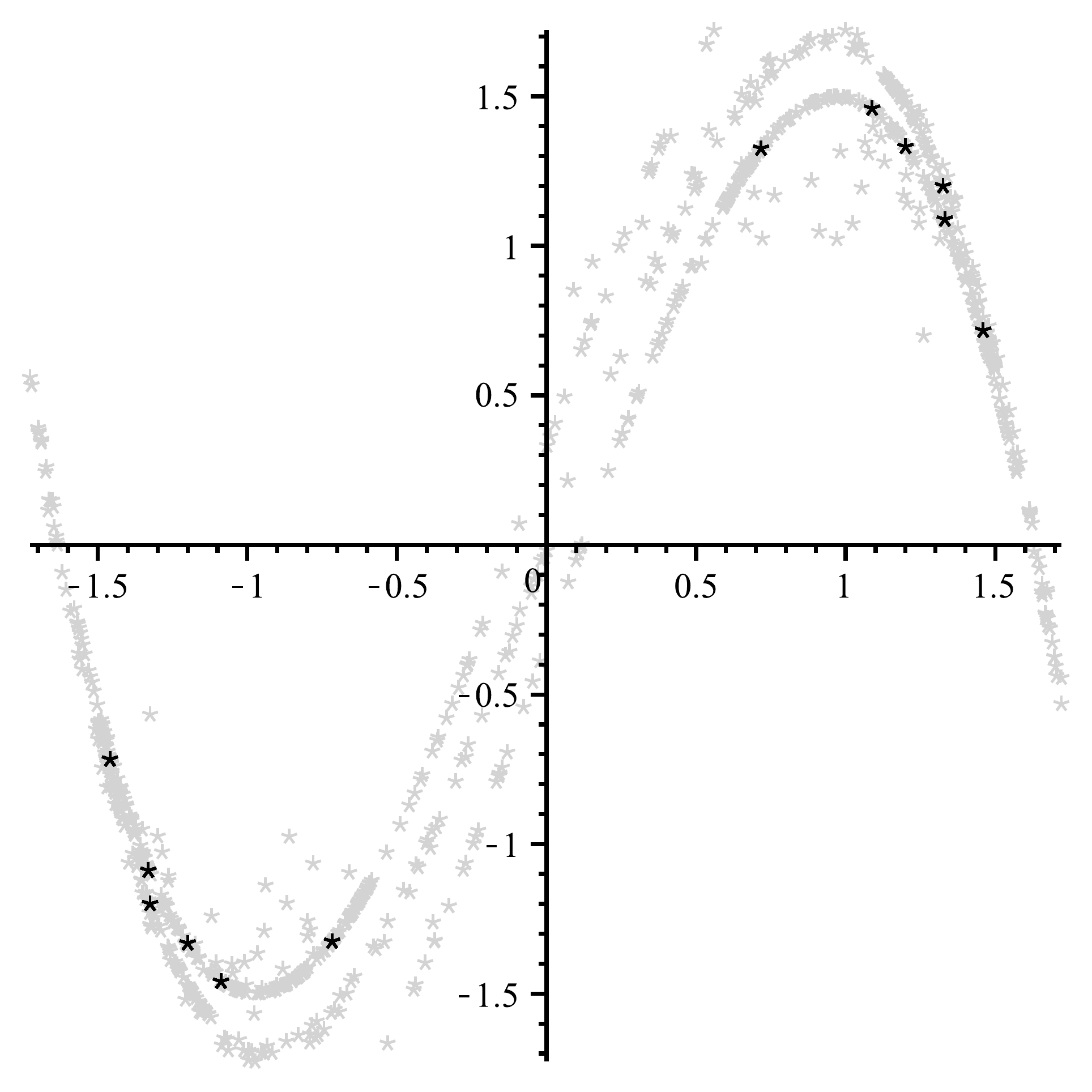}}

\centerline{The plot of 6-cycle for Holms cubic map  $\gamma=0.257$}

\subsection{Henon map} 
$$
x_{n+1}=1-\alpha x_n^2+y_n,\quad y_{n+1}=\beta x_n,\quad \alpha=1.4,\; \beta=0.3. 
$$

\centerline{
\includegraphics[scale=0.25]{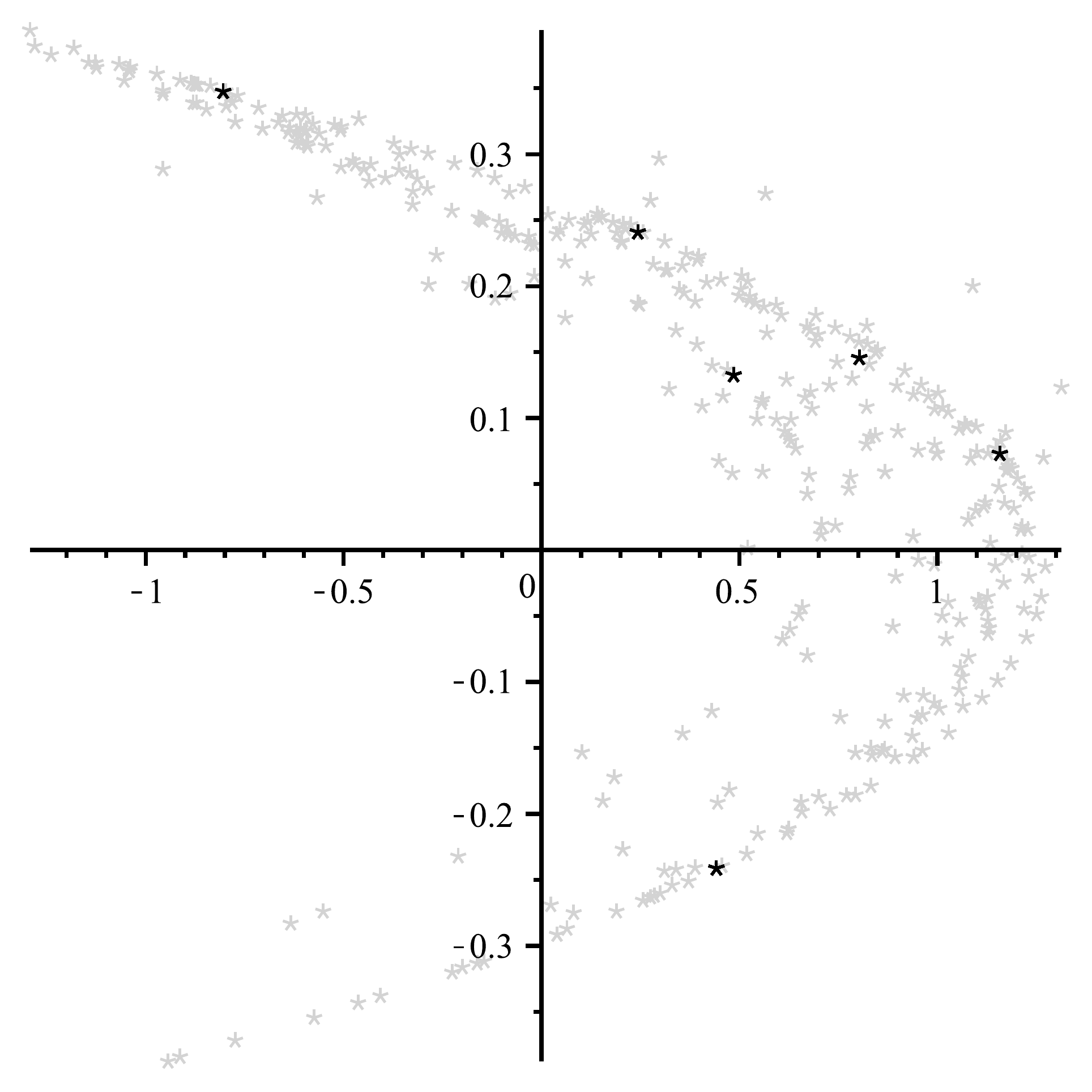}}

\centerline{The plot of 6-cycle for Henon map  $\gamma=0.257$}

\subsection{Lozi map}  Finally, we were able to find 8 cycle in Lozi map 
$$
x_{n+1}=1-\alpha|x_n|+y_n,\quad y_{n+1}=\beta x_n,\quad \alpha=1.4,\; \beta=0.3. 
$$ 

\centerline{
\includegraphics[scale=0.25]{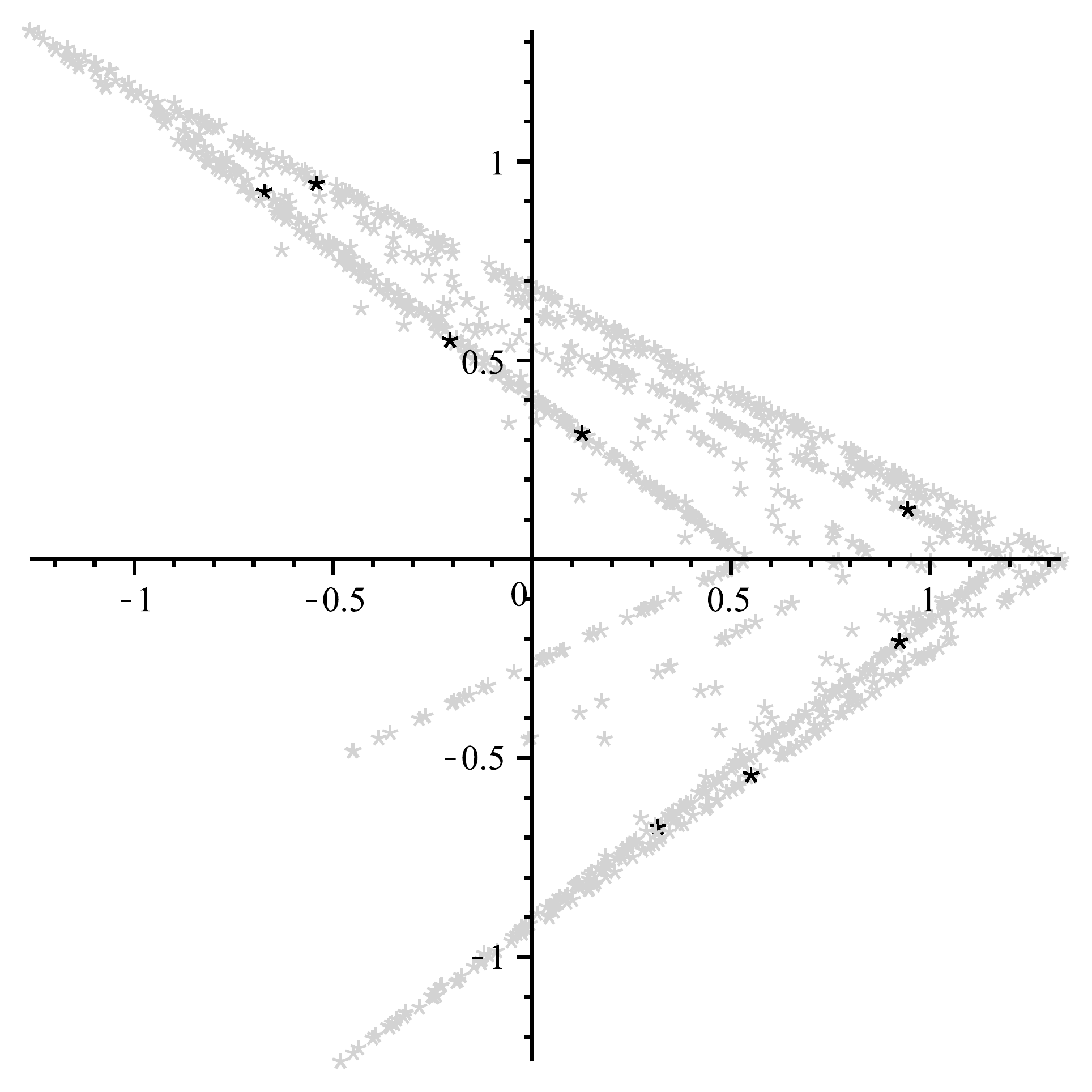}}

\centerline{The plot of 8-cycle for Lozi map  $\gamma=0.179$ }
\medskip

\section{Conclusion}

The current paper address the problem of finding cycles with small multipliers, which is a typical situation for short cycles, say $T\le 20$. For $T>>10$ the region of localization of the multipliers are larger and increases with increase of $T.$

To find short cycles we suggests a {\it simple} schedule which is based on the previous stages. For long cycles are needed more sophisticated schemes that we develop as well. 

Our schedule uses semilinear control \eqref{(6)} with two delays. The advantage of such control is a linear choise for the polynomial $q(z)$ in the denominator of the rational function \eqref{typ}. The number of delays depends on the size of the multipliers. Unfortunately, in case of three and more delays the choise for the polynomial $q(z)$ is not obvious at all. Thus in case $T>>2$ the choise of the control coefficient is difficult to state. So, the choice of the coefficient is rather an art - it depends on skills of a researcher.

We suggested specific algorithms in  \cite{autom,Har,HA}.

To compute the parameters of control in the current article we construct an auxiliary rational function and search the set of the exceptional values of that function in the unit disc.

We have obtained the necessary and sufficient conditions for the typical realness and local univalency. As a conjecture we can suggest that in our partial case the typical realnes and local univalency imply univalency.
Further, let us mention that beside the DFC scheme one can consider mixing. For both of those schemes the auxiliary rational function are the same, the control parameters are coincided. However, the suggested schedule 
$$
x_{n+1}=(1-\gamma)f\left(\frac{T+1}{T+2}x_n+\frac1{T+2}x_{n-T}\right)+\frac\gamma2 (x_{n-T+1}+x_{n-2T+1}),\quad T>1.
$$
is more convenient because on each step it requires computation only one value of function.

\section{Acknowledgment.} The authors would like to thank  Emil Icob, Paul Hagelstein and Alexey Solyanik for interesting discussions and for the help in preparation of the manuscript.

\end{document}